\documentclass[11pt, article, one side]{eptcs}


\usepackage{comment}
 \usepackage[normalem]{ulem}
\usepackage{mathtools}
\usepackage{amsthm}
\usepackage{amssymb}
\usepackage[frozencache]{minted}
\usepackage[usenames,dvipsnames]{xcolor}
\usepackage{tikz}
\usepackage{outline}
\usepackage[inline]{enumitem}
\usepackage{ifthen}
\usepackage[utf8]{inputenc} 
\usepackage[backend=biber, backref=true, maxbibnames = 10]{biblatex}
\allowdisplaybreaks



  \addbibresource{refs.bib} 

  \makeindex 

  \hypersetup{final}

  \setlist{nosep}

  \setminted{fontsize=\footnotesize}

  \usetikzlibrary{ 
  	cd,
  	math,
  	decorations.markings,
  	decorations.pathmorphing,
		decorations.pathreplacing,
  	positioning,
  	arrows.meta,
  	shapes,
		shadings,
  	calc,
  	fit,
  	quotes,
  	intersections
  }
  
	\tikzcdset{arrow style=tikz, diagrams={>=To}}

\tikzset{
     oriented WD/.style={
        every to/.style={out=0,in=180,draw},
        label/.style={
           font=\everymath\expandafter{\the\everymath\scriptstyle},
           inner sep=0pt,
           node distance=2pt and -2pt},
        semithick,
        node distance=1 and 1,
        decoration={markings, mark=at position \stringdecpos with \stringdec},
        ar/.style={postaction={decorate}},
        execute at begin picture={\tikzset{
           x=\bbx, y=\bby,
           }}
        },
     string decoration/.store in=\stringdec,
     string decoration={\arrow{stealth};},
     string decoration pos/.store in=\stringdecpos,
     string decoration pos=.7,
     bbx/.store in=\bbx,
     bbx = 1.5cm,
     bby/.store in=\bby,
     bby = 1.5ex,
     bb port sep/.store in=\bbportsep,
     bb port sep=1.5,
     bb port length/.store in=\bbportlen,
     bb port length=4pt,
     bb penetrate/.store in=\bbpenetrate,
     bb penetrate=0,
     bb min width/.store in=\bbminwidth,
     bb min width=1cm,
     bb rounded corners/.store in=\bbcorners,
     bb rounded corners=2pt,
     bb small/.style={bb port sep=1, bb port length=2.5pt, bbx=.4cm, bb min width=.4cm, 
bby=.7ex},
		 bb medium/.style={bb port sep=1, bb port length=2.5pt, bbx=.4cm, bb min width=.4cm, 
bby=.9ex},
     bb/.code 2 args={
        \pgfmathsetlengthmacro{\bbheight}{\bbportsep * (max(#1,#2)+1) * \bby}
        \pgfkeysalso{draw,minimum height=\bbheight,minimum width=\bbminwidth,outer 
sep=0pt,
           rounded corners=\bbcorners,thick,
           prefix after command={\pgfextra{\let\fixname\tikzlastnode}},
           append after command={\pgfextra{\draw
              \ifnum #1=0{} \else foreach \i in {1,...,#1} {
                 ($(\fixname.north west)!{\i/(#1+1)}!(\fixname.south west)$) +(-
\bbportlen,0) 
  coordinate (\fixname_in\i) -- +(\bbpenetrate,0) coordinate (\fixname_in\i')}\fi 
              \ifnum #2=0{} \else foreach \i in {1,...,#2} {
                 ($(\fixname.north east)!{\i/(#2+1)}!(\fixname.south east)$) +(-
\bbpenetrate,0) 
  coordinate (\fixname_out\i') -- +(\bbportlen,0) coordinate (\fixname_out\i)}\fi;
           }}}
     },
     bb name/.style={append after command={\pgfextra{\node[anchor=north] at 
(\fixname.north) {#1};}}}
  }


  \newtheorem{theorem}{Theorem}[section]

  \newtheorem{lemma}[theorem]{Lemma}
  
  \theoremstyle{definition}
  \newtheorem{definition}[theorem]{Definition}

  \newtheorem*{axiom*}{Axiom}
  
  \theoremstyle{remark}
  \newtheorem{example}[theorem]{Example}



	
\DeclareSymbolFont{stmry}{U}{stmry}{m}{n}
\DeclareMathSymbol\fatsemi\mathop{stmry}{"23}

\DeclareFontFamily{U}{mathx}{\hyphenchar\font45}
\DeclareFontShape{U}{mathx}{m}{n}{
      <5> <6> <7> <8> <9> <10>
      <10.95> <12> <14.4> <17.28> <20.74> <24.88>
      mathx10
      }{}
\DeclareSymbolFont{mathx}{U}{mathx}{m}{n}
\DeclareFontSubstitution{U}{mathx}{m}{n}
\DeclareMathAccent{\widecheck}{0}{mathx}{"71}



\DeclareMathOperator{\Hom}{Hom}

\DeclareMathOperator{\ob}{Ob}

\newcommand{\cat}[1]{\mathcal{#1}}
\newcommand{\Cat}[1]{\mathsf{#1}}

\newcommand{\id}{\mathrm{id}}

\newcommand{\from}{\leftarrow}

\newcommand{\LMO}[2][over]{\ifthenelse{\equal{#1}{over}}{\overset{#2}{\bullet}}{\underset{#2}{\bullet}}}

\newcommand{\fork}{\triangledown}

\newcommand{\inp}{_-}
\newcommand{\outp}{_+}

\newcommand{\src}{_{\text{src}}}
\newcommand{\tgt}{_{\text{tgt}}}

\newcommand{\smset}{\Cat{Set}}

\newcommand{\erase}[1]{}

\newcommand{\bij}{\Cat{Bij}}

\newcommand{\qqand}{\qquad\text{and}\qquad}

\newcommand{\dsnote}[1]{\textnormal{\color{blue}David says: #1}}
\newcommand{\dvnote}[1]{\textnormal{\color{green!50!black}Dmitry says: #1}}


\newcommand{\one}{\mathbf{1}}
\newcommand{\zero}{\textbf{0}}

\newcommand{\bfs}{\mathbf{s}}
\newcommand{\bft}{\mathbf{t}}
\newcommand{\bfu}{\mathbf{u}}
\newcommand{\bfv}{\mathbf{v}}

\newcommand{\then}{\mathbin{\fatsemi}}

\newcommand{\seq}{\mathsf{seq}}
\newcommand{\para}{\mathsf{para}}
\newcommand{\unit}{\mathsf{unit}}
\newcommand{\inert}{\mathsf{inert}}
\newcommand{\sym}{\mathsf{sym}}


 \linespread{1.07}
\setlength{\parindent}{15pt}
\setcounter{tocdepth}{2}
\setlength{\parskip}{0em}

\title{Wiring diagrams as normal forms for\\ computing in symmetric monoidal categories}
\author{Evan Patterson \and David I.\ Spivak \and Dmitry Vagner}

\begin{document}

\maketitle


\begin{abstract}
\noindent Applications of category theory often involve symmetric monoidal categories (SMCs), in which abstract processes or operations can be composed in series and parallel. However, in 2020 there remains a dearth of computational tools for working with SMCs. We present an  ``unbiased'' approach to implementing symmetric monoidal categories, based on an operad of directed, acyclic wiring diagrams. Because the interchange law and other laws of a SMC hold identically in a wiring diagram, no rewrite rules are needed to compare diagrams. We discuss the mathematics of the operad of wiring diagrams, as well as its implementation in the software package Catlab.
\end{abstract}

\section{Introduction}

The syntax for an algebraic structure is often derived from its traditional axiomatization, without additional thought. A symmetric monoidal category (SMC) is defined through operations of composition, identity, monoidal product, monoidal unit, and braiding, subject to various laws. Once it is decided how to assign symbols to these operations, such as $\then$ for composition and $\otimes$ for the monoidal product, a symbolic syntax for constructing objects and morphisms follows immediately. So, given morphisms, say $f\colon x \to x \otimes y$ and $g\colon y \otimes z \to z$, a new morphism can be constructed via such expressions as
\begin{equation*} 
(f \otimes \id_z) \then (\id_x \otimes g).
\end{equation*}
Symbolic syntax has a long tradition in algebra. Its utility derives, on the one hand, from its ease in writing and typesetting, and on the other, from its immediacy given an axiomatization of an algebraic structure as a generalized algebraic theory.

But these are not the only desiderata for mathematical syntax. In general, the mathematical objects denoted by two different expressions may be equal under the axioms. A good syntax narrows the gap between a mathematical object and its representation by avoiding redundancy. For example, since  monoidal products are associative in a strict SMC, the expressions $f \otimes (g \otimes h)$ and $(f \otimes g) \otimes h$ denote the same morphism; thus, it is standard practice to eliminate parentheses around the monoidal product, writing simply $f \otimes g \otimes h$.

Encoding algebraic equations into a simplified yet unambiguous syntax has important cognitive and computational benefits. For humans, it substitutes visual inspection for equational reasoning, playing to our cognitive strengths. For computers, it reduces possibly complex algorithms for checking \emph{equality} to a simple test of \emph{identity} on a suitable data structure. From this perspective, the ideal syntax provides a \emph{normal form}, making two expressions identical if and only if they denote equal mathematical objects.

What is the right syntax for symmetric monoidal categories? Beginning with the Penrose graphical notation for tensors \cite{penrose1971}, it was gradually understood that morphisms in a monoidal category are best depicted by a \emph{two-dimensional} syntax, with one axis representing composition and the other representing monoidal product. For example, the above expression $(f \otimes \id_z) \then (\id_x \otimes g)$ becomes the \emph{string diagram}
\[
\begin{tikzpicture}[oriented WD, bb small]
 	\node[bb={1}{2}] (Monf) {$f$};
  	\node[bb={2}{1}, below right=0 and 2 of Monf] (Mong) {$g$};
  	\node[bb={0}{0}, inner xsep=.3cm, white, fit=(Monf) (Mong)] (Mon) {};
  	\node[coordinate] at (Mon.west|-Monf_in1) (Mon_in1) {};
  	\node[coordinate] at (Mon.west|-Mong_in2) (Mon_in2) {};
  	\node[coordinate] at (Mon.east|-Monf_out1) (Mon_out1) {};
  	\node[coordinate] at (Mon.east|-Mong_out1) (Mon_out2) {};
  	\draw[shorten <=-2pt] (Mon_in1) -- node[above, font=\tiny] {$x$} (Monf_in1);
  	\draw[shorten >=-2pt] (Monf_out1) -- node[above, font=\tiny] {$x$} (Mon_out1);
  	\draw (Monf_out2) to node[above, font=\tiny] {$y$} (Mong_in1);
  	\draw[shorten <=-2pt] (Mon_in2) -- node[above, font=\tiny] {$z$} (Mong_in2);
  	\draw[shorten >=-2pt] (Mong_out1) -- node[above, font=\tiny] {$z$} (Mon_out2);
 \end{tikzpicture}
\]
As string diagrams, both sides of the \emph{interchange law}
$(f \then g) \otimes (h \then k) = (f \otimes h) \then (g \otimes k)
$
of a monoidal category have the same representation, namely:
\[
\begin{tikzpicture}[oriented WD, bb small]
    \node[bb={1}{1}] (Monf) {$f$};
    \node[bb={1}{1}, right=of Monf] (Mong) {$g\vphantom{f}$};
    \node[bb={1}{1}, below=2 of Monf] (Monh) {$h\vphantom{f}$};
    \node[bb={1}{1}, right=of Monh] (Monk) {$k\vphantom{f}$};
    \node[bb={0}{0}, inner xsep=.3cm, white, fit=(Monf) (Mong) (Monh) (Monk)] (Mon) {};
    \node[coordinate] at (Mon.west|-Monf_in1) (Mon_in1) {};
  	\node[coordinate] at (Mon.west|-Monh_in1) (Mon_in2) {};
  	\node[coordinate] at (Mon.east|-Mong_out1) (Mon_out1) {};
  	\node[coordinate] at (Mon.east|-Monk_out1) (Mon_out2) {};
  	\draw (Mon_in1) -- (Monf_in1);
  	\draw (Mon_in2) -- (Monh_in1);
  	\draw (Monf_out1) -- (Mong_in1);
  	\draw (Monh_out1) -- (Monk_in1);
  	\draw (Mong_out1) -- (Mon_out1);
  	\draw (Monk_out1) -- (Mon_out2);
\end{tikzpicture}
\]
String diagrams were first put on a rigorous footing by Joyal and Street, who showed that the diagrammatic language is sound and complete for the equations between morphisms deducible from the axioms of a strict symmetric monoidal category \cite{joyal1991}. Diagrammatic languages are now known for other kinds of monoidal categories, such as traced monoidal categories and hypergraph categories \cite{selinger2010,fong2019}.

For Joyal and Street, string diagrams are geometric figures in the plane or in a higher-dimensional Euclidean space. This perspective, while intuitively appealing, is of little computational use, since geometric objects are not readily translated into data structures. For computational purposes, we would like to extract the combinatorial data defining a string diagram, much like a graph does for a graph embedding or graph drawing.

\emph{Wiring diagrams} were introduced in \cite{rupel2013}. Though they are often depicted graphically, wiring diagrams are combinatorial, rather than geometric, objects. They also differ from string diagrams in that they deal only with the syntax of composition, and do not include explicit morphisms from a given SMC.  More precisely, wiring diagrams---say for representing compositions in symmetric monoidal categories---are organized as the morphisms of a typed operad $\cat{W}$. Composition of morphisms corresponds to nesting of wiring diagrams, for example:
\begin{equation}\label{eqn.nesting}
\raisebox{-.25in}{
\begin{tikzpicture}
\node (p1) {
\begin{tikzpicture}[oriented WD, bb small]
    \node[bb={1}{1}, orange] (a2) {};
    \node[bb={1}{1}, orange, below=of a2] (a3) {};
    \coordinate (helper) at ($(a2)!.5!(a3)$);
    \node[bb port sep=.8, bb={3}{2}, red, left=1.5 of helper] (a1) {};
    \node[bb={0}{0}, blue, fit=(a1) (a2) (a3)] (outer a) {};
    \draw (a1_in1) -- (outer a.west|-a1_in1);
    \draw (a1_in2) -- (outer a.west|-a1_in2);
    \draw (a1_in3) -- (outer a.west|-a1_in3);
    \draw (a1_out1) to (a2_in1);
    \draw (a1_out2) to (a3_in1);
    \draw (a2_out1) to (a2_out1-|outer a.east);
    \draw (a3_out1) to (a3_out1-|outer a.east);
\end{tikzpicture}
};
\node (p2) [right=.5 of p1] {
\begin{tikzpicture}[oriented WD, bb small]
    \node[bb={3}{2}, blue] (b1) {\tiny 1};
    \node[bb={2}{1}, green!50!black, below right=0 and 1 of b1] (b2) {\tiny 2};
    \node[bb={0}{0}, fit=(b1) (b2)] (outer b) {};
    \draw (b1_in1) -- (b1_in1-|outer b.west);
    \draw (b1_in2) -- (b1_in2-|outer b.west);
    \draw (b1_in3) -- (b1_in3-|outer b.west);
    \draw (b2_in2) -- (b2_in2-|outer b.west);
    \draw (b1_out1) -- (b1_out1-|outer b.east);
    \draw (b2_out1) -- (b2_out1-|outer b.east);
    \draw (b1_out2) to (b2_in1);
\end{tikzpicture}
};
\node (p3) [right=of p2] {
\begin{tikzpicture}[oriented WD, bb small]
    \node[bb={1}{1},orange] (c2) {};
    \node[bb={1}{1},orange, below=of c2] (c3) {};
    \coordinate (helper) at ($(c2)!.5!(c3)$);
    \node[bb port sep=.8, bb={3}{2}, red, left=1.5 of helper] (c1) {};
    \node[bb={2}{1}, green!50!black, below right=0 and 1 of c3] (c4) {\tiny 2};
    \node[bb={0}{0}, fit=(c1) (c2) (c3) (c4)] (outer c) {};
    \draw (c1_in1) -- (outer c.west|-c1_in1);
    \draw (c1_in2) -- (outer c.west|-c1_in2);
    \draw (c1_in3) -- (outer c.west|-c1_in3);
    \draw (c1_out1) to (c2_in1);
    \draw (c1_out2) to (c3_in1);
    \draw (c2_out1) to (c2_out1-|outer c.east);
    \draw (c3_out1) to (c4_in1);
    \draw (c4_in2) -- (c4_in2-|outer c.west);
    \draw (c4_out1) -- (c4_out1-|outer c.east);
\end{tikzpicture}
};
\node at ($(p1.east)!.5!(p2.west)$) {$\then_{\color{blue}1}$};
\node at ($(p2.east)!.5!(p3.west)$) {$=$};
\end{tikzpicture}
}
\end{equation}
Note that in the symbolic syntax of a symmetric monoidal category, there are infinitely-many ways to represent any of the wiring diagrams in \eqref{eqn.nesting}, for example the middle one could be represented by tensoring with an arbitrary number of monoidal units, but the wiring diagram represents exactly one element of the set $\cat{W}({\color{blue}X_1},{\color{green!50!black}X_2};Y)$, where ${\color{blue}X_1}$ and ${\color{green!50!black}X_2}$ are the inner boxes and $Y$ is the outer box.

A given model of $\cat{W}$, which we can think of as a symmetric monoidal category $\cat{C}$, is represented as a functor $H\colon\cat{W}\to\smset$, which we refer to as a $\cat{W}$-algebra.%
\footnote{For now, we elide the fact that wires should be annotated by types corresponding to the objects of $\cat{C}$.} 
Each box (object in $\cat{W}$) is sent by $H$ to the set of all $\cat{C}$-morphisms of that shape, and each wiring diagram (morphism in $\cat{W}$) is sent by $H$ to the function that takes morphisms in $\cat{C}$ and composes them accordingly.

Operads of wiring diagrams can be used to model various sorts of categories, such as traced or hypergraph categories \cite{spivak2017,fong2019}---whose string diagram languages we mentioned above---or even just ordinary categories. Here we will focus on the operad $\cat{W}$ of wiring diagrams for symmetric monoidal categories.


The aim of this paper is to show how $\cat{W}$ can serve as a foundation for the computer algebra of symmetric monoidal categories, both practically and theoretically. Since the interchange law and other axioms of an SMC hold identically within $\cat{W}$, wiring diagrams are nearly normal forms for morphisms in a free SMC, as will be explained. With this motivation, Section~\ref{wiring-diagrams-in-catlab} describes an implementation of the operad of wiring diagrams in Catlab.jl, a Julia package for applied category theory \cite{catlab}. The operad structure is taken as fundamental in Catlab and used to implement a diagrammatic syntax for symmetric monoidal categories. The wiring diagram operad is then extended to SMCs with extra structure, such as cartesian monoidal categories, biproduct categories, and traced monoidal categories. In contrast to systems like Quantomatic \cite{quantomatic} and Cartographer \cite{cartographer}, Catlab is not a graphical editor or a proof assistant; rather, it provides data structures and algorithms for computing with in symmetrical monoidal categories for scientific and engineering applications.

The remainder of the paper is about the mathematics of the operad of wiring diagrams. After some preliminaries on biproducts and spans in Section~\ref{ch:spans}, an operad of wiring diagrams is constructed in Section~\ref{ch:operads}. The directed wiring diagrams are defined to satisfy an acyclicity condition, making them suitable for symmetric monoidal categories. They are built using the categorical matrix calculus, which extends the notion of an adjacency matrix of a directed graph. Finally, Section~\ref{ch:smcs} shows how algebras on the wiring diagram operad give rise to symmetric monoidal categories and, conversely, how symmetric monoidal categories yield algebras on this operad. These two constructions are not inverse equivalences; however, beginning with an SMC, the roundtrip does return an equivalent SMC. We will make the correspondence more precise in Theorems~\ref{thm.wdalgto_smc} and \ref{thm.smc_alg}.
\section{Wiring diagrams in Catlab}
\label{wiring-diagrams-in-catlab}

In the Julia package Catlab, wiring diagrams are implemented as combinatorial data structures. Wiring diagrams are akin to directed graphs but possess extra structure; namely each box---which plays the role of a node---has an explicit set of input and output ports. Every wiring diagram has an underlying directed graph, obtained by forgetting this extra structure. For implementation purposes, Catlab exploits this hierarchy by building its data structure for wiring diagrams on top of graph data structures that already exist in the Julia ecosystem.

The functionality for wiring diagrams is implemented in several layers. The bottom layer is the core data structure and a low-level imperative interface for mutating it. Operadic composition---or substitution---of wiring diagrams is defined using this interface. Finally, the operadic interface is used to define a syntax for morphisms in symmetric monoidal categories. The three layers are summarized in Sections~\ref{sec.wd_data_struc}, \ref{sec:catlab compose}, and \ref{sec.wd_syntax} respectively.

\subsection{The wiring diagram data structure}\label{sec.wd_data_struc}

The data structure for wiring diagrams is comprised of several data types. From the user's perspective, these types are:
\begin{enumerate}
\item \texttt{WiringDiagram}: a quadruple consisting of (i) a list of input port types and (ii) a list of output port types, both for the outer box; (iii) a list of boxes, where the first and second entries are special values representing the input and output types of the outer box and the remaining entries are boxes inside the diagram, of type \texttt{Box}; and (iv) a set of wires, of type \texttt{Wire}.

\item \texttt{Box}: a triple consisting of (i) a label or value for the box, (ii) a list of input port types, (iii) and a list of output port types. Thus, a morphism \(f: x \otimes y \to z\) would be represented as \texttt{Box(:f, [:x,:y], [:z])}, where the Julia syntax \texttt{:x} denotes a symbol named ``\texttt{x}''.

\item \texttt{Wire}: a source-target pair, where both the source and target are pairs of numbers identifying a box in the diagram and an input or output port on that box. So, a wire from output port $p$ on box $i$ to input port $q$ on box $j$ is \texttt{Wire((i,p) => (j,q))}.
\end{enumerate}

As a small but complete example, the wiring diagram corresponding to the composite $f \then g: x \to z$ of morphisms $f: x \to y$ and $g: y \to z$ is implemented as:
\begin{center}
\begin{minted}{julia}
WiringDiagram([:x], [:z],
  [ 1 => {inputs}, 2 => {outputs}, 3 => Box(:f, [:x], [:y]), 4 => Box(:g, [:y], [:z]) ],
  [ Wire((1,1) => (3,1)), Wire((3,1) => (4,1)), Wire((4,1) => (2,1)) ])
\end{minted}
\end{center}

For performance reasons, the wires in a wiring diagram are not actually stored as a set. Instead, underlying each wiring diagram is a simple directed graph, as implemented by the Julia package LightGraphs.jl \cite{bromberger17}. The vertices in the graph are numbered consecutively from 1 to $n+2$, where $n$ is the number of boxes in the wiring diagram. Vertices 1 and 2, labelled \texttt{\{inputs\}} and \texttt{\{outputs\}} above, refer to the inputs and outputs of the outer box, respectively; the remaining vertices refer to boxes inside the diagram. There is an edge between two vertices if and only if there is at least one wire between the corresponding boxes.

Traversals on the wiring diagram are then delegated to the underlying directed graph. For example, to find the in-neighbors or out-neighbors of a box, one simply finds the in-neighbors or out-neighbors of the corresponding vertex in the underlying graph. The graph data structure is designed to do this efficiently. Wiring diagrams can be accessed and mutated through a simple imperative interface. Boxes, ports, and wires can be retrieved individually or iterated over, and boxes and wires can be added and removed from an existing diagram.

\subsection{Implementing the operad of wiring diagrams}
\label{sec:catlab compose}

Operadic composition of wiring diagrams as in \eqref{eqn.nesting} is supported through the function \texttt{ocompose}, with two signatures:

\begin{minted}{Julia}
ocompose(f::WiringDiagram, gs::Vector{<:WiringDiagram})::WiringDiagram
ocompose(f::WiringDiagram, i::Int, g::WiringDiagram)::WiringDiagram
\end{minted}

The first signature corresponds to full (May-style) operadic composition \(\circ\) and the second corresponds to partial (Markl-style) operadic composition \(\circ_i\). Both methods are one-line wrappers around the procedure \texttt{substitute}, which substitutes wiring diagrams for one or more boxes in another wiring diagram. Mathematically, \texttt{substitute} simultaneously performs one or more non-overlapping partial operadic compositions.

Substitution of wiring diagrams is implemented using the imperative interface. In outline, the algorithm proceeds as follows:
\begin{enumerate}

\item Create a copy \texttt{d} of the original wiring diagram.

\item Add to \texttt{d} all the boxes from all the diagrams to be substituted.

\item Extend each wire in each substituted diagram to new a wire in \texttt{d}. This subroutine branches into four different cases, shown in pseudo-Julia code in Listing \ref{lst:substitute-wires}. 

\item Remove from \texttt{d} all the boxes from the original diagram that were to be substituted, which in turn removes any extraneous wires created during step 3.

\end{enumerate}

\begin{listing}[h]
\begin{minted}{julia}
function substitute_wires!(d::WiringDiagram, v::Int, sub::WiringDiagram)
  for wire in wires(sub)
    # Case 1: Passing wire.
    if {wire source and target are on outer box}
      for in_wire in {wires in d incoming to source outer port}
        for out_wire in {wires in d outgoing from target outer port}
          {add wire to d fusing in_wire -> wire -> out_wire}
        end
      end
    # Case 2: Incoming wire.
    elseif {wire source is on outer box}
      for in_wire in {wires in d incoming to source outer port}
        {add wire to d fusing in_wire -> wire}
      end
    # Case 3: Outgoing wire.
    elseif {wire target is on outer box}
      for out_wire in {wires in d outgoing from target outer port}
        {add wire to d fusing wire -> out_wire}
      end
    # Case 4: Fully internal wire.
    else
      {add wire to d}
    end
  end
end
\end{minted}
\caption{Pseudo-Julia code for main subroutine in substitution algorithm. Each wire substituted into the new diagram is either passing, incoming, outgoing, or fully internal. The inner loops under each of the cases are needed because a port may have many or no incident wires, representing copying, merging, deleting, or creating.}
\label{lst:substitute-wires}
\end{listing}

\subsection{Wiring diagrams as a syntax for SMCs}\label{sec.wd_syntax}

Having implemented the operad of wiring diagrams, it is now straightforward to define an alternate syntax for symmetric monoidal categories using wiring diagrams. That is, we construct a symmetric monoidal category whose morphisms are wiring diagrams in which each box has been filled with a morphism. To compute the series composition $f\then g$ of morphisms $f$ and $g$, one forms the following wiring diagram
\[
\begin{tikzpicture}[oriented WD, bb small, bb port length=0]
    \node[bb={6}{6}] (f) {\tiny$f$};
    \node[bb={6}{6}, right=of f] (g) {\tiny$g$\vphantom{$f$}};
    \node[bb={0}{0}, inner xsep=10pt, fit=(f) (g)] (outer) {};
    \draw (f_out1) -- (g_in1);
    \draw (f_out6) -- (g_in6);
        \draw (f_in1) -- (f_in1-|outer.west);
        \draw (f_in6) -- (f_in6-|outer.west);
        \draw (g_out1) -- (g_in1-|outer.east);
        \draw (g_out6) -- (g_in6-|outer.east);
    \node[above left=-1pt and .5pt of f_in6] {\tiny $\vdots$};
    \node[above left=-1pt and .5pt of g_in6] {\tiny $\vdots$};
    \node[above right=-1pt and .5pt of g_out6] {\tiny $\vdots$};
\end{tikzpicture}
\]
and then performs an operadic composition. Apart from exception handling and formatting, the following  Julia code for the procedure \texttt{compose} is identical to the implementation in Catlab.

\begin{minted}{julia}
function compose(f::WiringDiagram, g::WiringDiagram)::WiringDiagram
  @assert length(codom(f)) == length(dom(g))
  h = WiringDiagram(dom(f), codom(g))
  fv, gv = add_box!(h, f), add_box!(h, g)
  add_wires!(h, [[ (input_id(h),i) => (fv,i) for i in 1:length(dom(f)) ];
                 [ (fv,i) => (gv,i) for i in 1:length(codom(f)) ];
                 [ (gv,i) => (output_id(h),i) for i in 1:length(codom(g)) ]])
  substitute(h, [fv,gv])
end
\end{minted}
\noindent
Similarly, to compute the parallel composition $f \otimes g$ of two morphisms $f$ and $g$, form the generic diagram with two boxes composed in parallel and then perform an operadic composition:
\[
\begin{tikzpicture}[oriented WD, bb small, bb port length=0, font=\tiny]
    \node[bb={6}{6}] (f) {$f$};
    \node[bb={6}{6}, below=of f] (g) {$g$\vphantom{$f$}};
    \node[bb={0}{0}, inner xsep=10pt, fit=(f) (g)] (outer) {};
        \draw (f_in1) -- (f_in1-|outer.west);
        \draw (f_in6) -- (f_in6-|outer.west);
        \draw (f_out1) -- (f_out1-|outer.east);
        \draw (f_out6) -- (f_out6-|outer.east);
        \draw (g_in1) -- (g_in1-|outer.west);
        \draw (g_in6) -- (g_in6-|outer.west);
        \draw (g_out1) -- (g_out1-|outer.east);
        \draw (g_out6) -- (g_out6-|outer.east);
    \node[above left=-1pt and .5pt of f_in6] {\tiny $\vdots$};
    \node[above right=-1pt and .5pt of f_out6] {\tiny $\vdots$};
    \node[above left=-1pt and .5pt of g_in6] {\tiny $\vdots$};
    \node[above right=-1pt and .5pt of g_out6] {\tiny $\vdots$};
\end{tikzpicture}
\]

The wiring diagram syntax provides a normal form for morphisms in a free symmetric monoidal category, up to (simple directed) graph isomorphism. Specifically, two morphisms represented by wiring diagrams are equal if and only if there is an isomorphism of the underlying graphs making the diagrams identical as Julia data structures. Although the graph isomorphism problem is not known to be solvable in polynomial time, in practice it can usually be solved efficiently. Moreover, the labels on the boxes drastically restrict the possible matchings. Wiring diagrams thus constitute an effective normal form for morphisms in a free SMC.
\section{Biproducts, matrix calculus, and categories of spans}\label{ch:spans}

We will formally present a wiring diagram by encoding all the interconnections between its boxes as a single span. Since the category of spans enjoys a biproduct structure, we can represent wiring diagrams as matrices and their compositions using matrix algebra. We begin with a brief detour into biproduct categories.

\subsection{Morphisms as matrices}
Let $(\cat{C},\oplus,\zero)$ be a biproduct category, i.e., a monoidal category for which $\oplus$ is both a product and a coproduct. Such categories can equivalently be seen as monoidal categories with a homomorphic supply of bimonoids \cite{fong2019supply}. 
Given a morphism %
\[\bft_1\oplus\cdots\oplus\bft_m\xrightarrow{f}\bfv_1\oplus\cdots\oplus\bfv_n\]
in such a category, we can extract the \emph{component} ${}_{\bft_i}f{}_{\bfv_j}$ corresponding to a choice of direct summand $\bft_i$ in the domain and $\bfv_j$ in the codomain, by pre- and post-composing $f$ with canonical inclusions and projections:
\begin{equation}\label{eqn.extract}
\begin{tikzcd}
    \bft_i \ar[r,"\iota_{\bft_i}"] 
&   \bft_1\oplus\cdots\oplus\bft_n \ar[r, "f"]
&   \bfv_1\oplus\cdots\oplus\bfv_m \ar[r,"\pi_{\bfv_j}"]
&   \bfv_j.
\end{tikzcd}
\end{equation}
This procedure defines a map of type $\cat{C}\big(\bigoplus_{i:I}\bft_i,\,\bigoplus_{j:J}\bfv_j\big)\to \cat{C}(\bft_i,\bfv_j)$ for finite sets $I$ and $J$,
to which we can define a section by swapping the inclusion and projections in \eqref{eqn.extract}. We say that this section is the \emph{embedding} of the component since it produces a morphism for which all other components are the zero morphisms $\zero:\bft_{k}\to\bfv_{l}$, i.e. the unique morphism of the form $\bft_k\to\zero\to\bfv_l$. 

Any biproduct category is enriched in commutative monoids, with the sum $f+g$ of two morphisms of type $\bft\to\bfv$ given by $\begin{tikzcd}
\bft \ar[r,"\Delta"] & \bft\oplus \bft \ar[r,"f\oplus g"] & \bfv \oplus \bfv \ar[r, "\nabla"] & \bfv
\end{tikzcd}
$.
Here $\Delta$ and $\nabla$ are the diagonal and codiagonal morphisms arising from the universal property of product and coproduct. By the naturality of these maps, composition distributes over sums. This operation allows us to define the leftward map in the bijection $
\cat{C}\big(\bigoplus_{i: I}\bft_i,\,\bigoplus_{j: J}\bfv_j\big) \cong \prod_{(i,j):I\times J}\cat{C}(\bft_i,\bfv_j)
$
by simply summing across the embeddings of each component. By the universal property of the biproduct, a morphism is fully specified by its components. More formally, given a choice of direct sum decomposition on both domain and codomain, we can represent $f$ as the block matrix 
\[
f = 
\begin{bmatrix}
{}_{\bft_1}f{}_{\bfv_1} & \cdots & {}_{\bft_1}f{}_{\bfv_n} \\
\vdots      & \ddots & \vdots      \\
{}_{\bft_m}f{}_{\bfv_1} & \cdots & {}_{\bft_m}f{}_{\bfv_n}
\end{bmatrix}
\]
meaning that we can reason about morphisms by simultaneously breaking up both domain and codomain into cases. As an example of this reasoning, we recover matrix multiplication from the composition of two morphisms
\[
\begin{tikzcd}
\bfs_1\oplus\cdots\oplus\bfs_k \ar[r, "f"] & \bft_1\oplus\cdots\oplus\bft_m \ar[r,"g"] & \bfv_1\oplus\cdots\oplus\bfv_n.
\end{tikzcd}
\]
The ${}_{\bfs_i}(f g){}_{\bfv_j}$ component is defined as the composite $(\iota_{\bfs_i}f)(g \pi_{\bfv_j})$, where $\iota_{\bfs_i}f$ is given by a sum of morphisms of type $\bfs_i\to\bft_1\oplus\cdots\oplus\bft_m$ and $g \pi_{\bfv_j}$ is given by a sum of morphisms of type $\bft_1\oplus\cdots\oplus\bft_m\to\bfv_j$:
\[    \iota_{\bfs_i}  f = ({}_{\bfs_i}f{}_{\bft_1})( \iota_{\bft_1})+\cdots+({}_{\bfs_i}f{}_{\bft_m}) (\iota_{\bft_m})
    \qqand
    g \pi_{\bfv_j} = (\pi_{\bft_1})( {}_{\bft_1}g{}_{\bfv_j})+\cdots+(\pi_{\bft_m})( {}_{\bft_m}g{}_{\bfv_j}).
\]
Composing these sums, applying distributivity, and then noting that $\iota_{\bft_k} \pi_{\bft_l}=\delta_{k,l}$ (the Kronecker delta), we arrive at the familiar formula for the component of a matrix multiplication
\[{}_{\bfs_i}(f  g){}_{\bfv_j}=\sum_{i,j}({}_{\bfs_i}f{}_{\bft_k})(\iota_{\bft_k} \pi_{\bft_l})({}_{\bft_l}g{}_{\bfv_j})=\sum_{k}({}_{\bfs_i}f{}_{\bft_k})( {}_{\bft_k}g{}_{\bfv_j}).\]
We can narrativize this expression as answering the question ``how can we go from $\bfs_i$ to $\bfv_j$?''
with the response ``through any one of the $\bft_k$,'' demonstrating that the sum can be interpreted as a disjunction. The capacity that biproduct categories possess for declaratively formalizing such reasoning motivates us to situate wiring diagrams in a certain biproduct category, which we now proceed to define.

\subsection{Spans}
Let $\cat{C}$ be an extensive category, i.e., one for which the coproduct functor
$\cat{C}/x\times\cat{C}/y\to \cat{C}/(x+y)$
is an equivalence of categories. For example, $\cat{C}$ could be any elementary topos. Recall the bicategory of $\cat{C}$-spans, whose objects are $\cat{C}$-objects, morphisms $X\to Y$ are spans $Y\leftarrow S\to X$ in $\cat{C}$, and 2-morphisms from $Y\leftarrow S\to X$ to $Y\leftarrow S'\to X$ are $\cat{C}$-arrows $S\to S'$ commuting with the legs of the spans. Let $\Cat{Sp}\cat{C}$ be the category whose objects are $\cat{C}$-objects and whose morphisms are isomorphism classes of spans in the bicategory of $\cat{C}$-spans. The coproduct in $\cat{C}$ is then the biproduct in $\Cat{Sp}\cat{C}$, which we denote as $\oplus$. This means that we can compute with spans using the matrix calculus of the previous subsection.

It is worthwhile to observe how the matrix calculus manifests itself in this concrete setting. Given a span $\Phi:\bft_1\oplus\cdots\oplus\bft_m\to\bfv_1\oplus\cdots\oplus\bfv_n$ and indices $1\leq i\leq m$ and $1\leq j\leq n$, we can define a component sub-span ${}_{\bft_i}\Phi{}_{\bfv_j}:\bft_i\to \bfv_j$ via the following limit:
\[
\begin{tikzcd}[column sep=small, row sep=3pt]
{} & {} & {}_{\bft_i}\Phi{}_{\bfv_j} \ar[ddrr,dashed] \ar[dd,dashed] \ar[ddll,dashed] \arrow[d, phantom, very near start] & {} & {} \\
{} & {} & {} \\
\bft_i \ar[dr,hookrightarrow] & {} & \Phi \ar[dr] \ar[dl] & {} & \bfv_j \ar[dl,hookrightarrow] \\
{} & \bft_1\oplus\cdots\oplus\bft_m & {} & \bfv_1\oplus\cdots\oplus\bfv_n& {} 
\end{tikzcd}    
\]
The multiplication $\bft\oplus\bft\to\bft$ and  unit $\zero\to\bft$ morphisms are given by the spans
\[
\begin{tikzcd}
 \bft\oplus  \bft &  \bft\oplus \bft \ar[l,"\id"'] \ar[r,"\nabla"] &  \bft 
\end{tikzcd}
\hspace{15 mm}
\begin{tikzcd}
\zero & \zero \ar[l,"\id"'] \ar[r,"!"] &  \bft 
\end{tikzcd}
\]
and the comultiplication and counit are their transposes.

Let $\tau$ be a set, which we interpret as a set of types. We now specialize to the slice category $\cat{C}\coloneqq\Cat{Set}/\tau$, for which an object is a function $x\to\tau$, regarded as a \emph{$\tau$-typed set}; when we speak of its elements, we mean elements of $x$. Note that $\cat{C}\coloneqq\Cat{Set}/\tau$ is extensive, so the above discussion applies. In the setting of $\Cat{Sp}\cat{C}$ we will refer to elements of the domain and codomain of a map as \emph{ports}, to elements of the apex as \emph{wires}, and to the legs as \emph{attachment maps}.

The sum of two spans $
\begin{tikzcd}
\bft & \alpha \ar[l, "f\inp"'] \ar[r, "f\outp"] & \bfv
\end{tikzcd}
$ and $\begin{tikzcd}
\bft & \beta \ar[l, "g\inp"'] \ar[r, "g\outp"] & \bfv
\end{tikzcd}
$
is given by the span
\[
\begin{tikzcd}[column sep=large]
\bft & \alpha\oplus\beta \ar[l, "f\inp\fork g\inp"'] \ar[r, "f\outp\fork g\outp"]  & \bfv
\end{tikzcd}
\] 
where, given $f:\bft\to\bfv$ and $g:\bfu\to\bfv$, $f\triangledown g:\bft\oplus\bfu\to\bfv$ is the composition $\iota_{\bft}f+\iota_{\bfu}g$. We hence interpret this enrichment in commutative monoids as a disjunction: in $\alpha\oplus\beta$, the $\bft$ ports attach to the $\bfv$ ports via \emph{either} $\alpha$-wires \emph{or} $\beta$-wires. Of course, the zero span is $\bft\from\zero\to\bfv$, also denoted $\zero$.

In linear algebra, block matrices of linear maps correspond to direct sum decompositions, while ordinary matrices of \emph{scalars} correspond to maximal decompositions into one-dimensional subspaces. Similarly, in $\Cat{Sp}(\Cat{Set}/\tau)$, we have maximal decompositions into singleton sets. What serves as a basis in this context is simply a choice of ordering on a set, which we represent via enumeration as a tuple. In such a decomposition, every entry of the corresponding matrix represents a subspan whose domain and codomain are both singletons. In this case, the attachment maps are trivial, and hence we can, without loss of information, represent such entries via the set of wires that connect the domain and codomain singletons. Furthermore, for the sake of notational convenience, if a span $\Phi$ is given by a diagonal matrix, i.e. consists of a direct sum of singleton spans with apex sets $S_1,\dots,S_n$, we will write $S_1\oplus\cdots\oplus S_n$ for the total span.

\begin{example}
The $(r,s,t)\to(x,y,z)$ matrix 
\[
\begin{bmatrix}
\zero & \{B\} & \zero \\ 
\zero & \zero & \zero \\
\{C\} & \zero & \{A,D\}
\end{bmatrix}
\]
represents the span
$
\begin{tikzcd}
\{r,s,t\} &  \{A,B,C,D\} \ar[l, "f\inp"'] \ar[r, "f\outp"] & \{x,y,z\}
\end{tikzcd}
$
given by
\begin{align*}
    f\inp(A)  &= t
&   f\inp(B)  &= r
&   f\inp(C)  &= t
&   f\inp(D)  &= t
\\
    f\outp(A) &= z 
&   f\outp(B) &= y
&   f\outp(C) &= x 
&   f\outp(D) &= z
\end{align*}
We note that the span matrix is like an adjacency matrix, except that rather than merely indicating the presence of connection via a truth value, the entries indicate the set of all such connections.
\end{example}
We note that the composite of the span $\{s\}\leftarrow\alpha\to\{t\}$ and the span $\{t\}\leftarrow\beta\to\{v\}$ is the span $\{s\}\leftarrow\alpha\times\beta\to\{v\}$. We can hence perform matrix multiplication as in the following example.

\begin{example}\label{ex.composite_span}
The composite of the span matrices
\[
\begin{tikzpicture}
\node (A) at (-6,0) {$(j,k)$};
\node (B) at (0,0) {$(r,s,t)$};
\node (C) at (6,0) {$(x,y,z)$};

\node[above] at (-3,0) {\small $
\begin{bmatrix}
\zero & \{N\} & \{L, M\} \\ 
\{O,P\} & \zero & \{Q\}
\end{bmatrix}
$};
\node[above] at (3,0) {\small $
\begin{bmatrix}
\{F,G\}  & \{B\} & \zero \\ 
\zero & \zero & \{A,D\} \\
\{C\} & \zero & \{E\}
\end{bmatrix}
$};

\draw [->] (A) -- (B);
\draw [->] (B) -- (C);
\end{tikzpicture}
\]
is given by the following $(j,k)\to(x,y,z)$ matrix
\[
\begin{bmatrix}
\{L,M\}\times\{C\} & \zero & \{N\}\times\{A,D\}+\{L,M\}\times\{E\}\\
\{O,P\}\times\{F,G\} + \{Q\}\times\{C\} & \{O,P\}\times\{B\} & \{Q\}\times\{E\}\\
\end{bmatrix}
\]
\end{example}

Given our interpretation of apexes as wires, allowing non-bijective attachment maps corresponds to allowing wires to split, merge, terminate, and initiate. For the purposes of defining wiring diagrams for (strict) symmetric monoidal categories, we will (in Definition~\ref{def:wd_operad}) restrict the legs of our spans to bijections; however, the more general definition can be readily used to define wiring diagrams for SMCs in which objects are supplied with monoids and/or comonoids. In particular, this bijectivity restriction is lifted in Listing~\ref{lst:substitute-wires}, where ports can attach to multiple other ports on either side.

We denote the category of $\tau$-typed finite sets and typed bijections between them as $\bij_\tau$ and the associated category of spans by $\Cat{Sp}(\bij_\tau)$. One might remark that spans of bijections are equivalent simply to bijections and wonder why Definition~\ref{def:wd_operad} uses the former. Indeed, since $\Cat{Sp}(\bij_\tau)$ is equivalent to $\bij_\tau$, we could have in principle simply defined such wiring diagrams in the latter category. However, our composition formula mirrors the case analysis of Listing~\ref{lst:substitute-wires} by leveraging the matrix calculus of the biproduct category $\Cat{Sp}(\Cat{Set}/\tau)$.

\section{The operad of acyclic wiring diagrams}\label{ch:operads}

We are now equipped to characterize unbiased compositions of morphisms in a strict symmetric monoidal category. We make the strictness assumption for a couple of reasons. First, any monoidal category is monoidally equivalent to a strict one and hence no loss of generality is incurred. Second, the graphical languages of string and wiring diagrams prohibits bracketing of parallel wires and hides unit wires, corresponding to strict associativity and unitality.

In addition to requiring that our symmetric monoidal category be strict, we define the monoidal product in an unbiased manner, as in \cite{deligne1982}. Thus, we index the hom-sets by a pair of typed finite sets---rather than typed finite ordinals---whose monoidal product give the domain and codomain respectively. In the context of computer science, this choice corresponds to a dictionary-like representation of domain and codomain rather than the more traditional list-like representation. We note that this does \emph{not} imply that we enforce commutativity of the monoidal product. For instance, the typed finite sets $f,g:\{a_1,a_2\}\rightrightarrows \tau$ given by $f(a_1)=t_1, f(a_2)=t_2$ and $g(a_1)=t_2, g(a_2)=t_1$ are distinct, though isomorphic.

Fix a set $\tau$, whose elements we think of as types. Next we will define the operad $\cat{W}_\tau$ of \emph{acyclic wiring diagrams}. Roughly speaking, morphisms in $\cat{W}_\tau$ specify compositions (serial, parallel, etc.) of morphisms in an arbitrary strict unbiased SMC $\cat{C}$ that has been equipped with a function $\tau\to\ob(\cat{C})$, i.e., for which an object of $\cat{C}$ has been chosen for each element of $\tau$.

\begin{definition}\label{def:wd_operad}
Let $\tau$ be a set, elements of which we call \emph{types}. We define the $\tau$-typed operad $\cat{W}_\tau$ of acyclic wiring diagrams as follows.
\begin{itemize}
    \item an object, called a \emph{box}, $\bft$ is a $\tau$-typed signed set; i.e.\ a pair $(\bft\inp,\bft\outp)$ of $\tau$-typed sets, where a $\tau$-typed set is an object of $\Cat{Set}/\tau$; we call $\bft\inp$ the \emph{inputs} and $\bft\outp$ the \emph{outputs}. 
    \item a morphism, called a \emph{wiring diagram}, $\Phi$ of type $\bft^1,\dots, \bft^n \to \bfv$ is a span in $\bij_{\tau}$
    \[
        \begin{tikzcd}
        \bfv\inp\oplus\bft\outp  & \omega \ar[l,"\Phi\src"'] \ar[r,"\Phi\tgt"] & \bft\inp \oplus\bfv\outp 
        \end{tikzcd}
    \]
    where $\bft_\pm\coloneqq\bigoplus_{i=1}^n\bft^i_\pm$. We call $\omega$-elements \emph{wires} and enforce the following:
    \begin{itemize}
        \item[$\star$] \emph{progress condition}: imposing $\bft^i\prec\bft^j$ whenever there is a wire $A\in \omega$ for which both $\Phi\src(A)\in\bft^i\outp$ and $\Phi\tgt(A)\in\bft^j\inp$, the result must be a partial order on the $\bft^i$.
    \end{itemize} 
    \item the identity morphism $\inert_{\bft}:\bft\to\bft$ is given by the identity span
    \[
    \begin{tikzcd}
    \bft\inp\oplus \bft\outp & \bft\inp \oplus \bft\outp \ar[l, "\one"'] \ar[r,"\one"] & \bft\inp \oplus \bft\outp
    \end{tikzcd}
    \]
    we note that the block matrix form in the summands is given by the identity matrix.
    \item given wiring diagrams
$\Psi:\bfs^{i_1},\dots,\bfs^{i{}_{m_i}}\to\bft^i$
and $\Phi:\bft^1,\dots,\bft^n \to\bfv 
$,
    we now define their $i^{\text{th}}$ partial composite (sometimes called ``circle-$i$'' composition) \[\Psi\then_i\Phi:\bft^1,\dots,\bfs^{i_1},\dots,\bfs^{i{}_{m_i}},\dots\bft^n\to\bfv\]
    Letting $\bft^{\neg i}_\pm\coloneqq\bigoplus{}_{j\neq i}\bft^j_\pm$, we make the following abbreviations
    \begin{equation*}
    \bfu\inp \coloneqq 
        \bfv\inp \oplus \bft^{\neg i}\outp
    \hspace{15 mm} \bfu\outp \coloneqq 
       \bft^{\neg i}\inp\oplus\bfv\outp
\end{equation*}
    The composite $\Psi\then_i\Phi$ is then given by the block matrix of type $\bfu\inp\oplus\bfs\outp\to\bfs\inp\oplus\bfu\outp$
    \begin{equation}\label{eqn.composite_formula}
    \Psi\then_i\Phi = 
    \begin{bmatrix} 
    ({}_{\bfu\inp}\Phi{}_{\bft^i\inp})
    ({}_{\bft^i\inp}\Psi{}_{\bfs\inp})
    &
    {}_{\bfu\inp}\Phi{}_{\bfu\outp} 
    + 
    ({}_{\bfu\inp}\Phi{}_{\bft^i\inp})
    ({}_{\bft^i\inp}\Psi{}_{\bft^i\outp})
    ({}_{\bft^i\outp}\Phi{}_{\bfu\outp})
    \\
    {}_{\bfs\outp}\Psi{}_{\bfs\inp}
    &
    ({}_{\bfs\outp}\Psi{}_{\bft^i\outp})
    ({}_{\bft^i\outp}\Phi{}_{\bfu\outp})
    \end{bmatrix} 
    \end{equation}
If $\prec^\Phi$ and $\prec^\Psi$ are the orderings induced on the $\bft^i$ and $\bfs^j$ by $\Phi$ and $\Psi$ respectively, then one can define an ordering $\prec$ on $\{\bft^1,\dots,\bfs^{k_1},\dots,\bfs^{k{}_{m_k}},\dots\bft^n\}$ as follows
\begin{align*}
    \bfs^{k_i} \prec \bfs^{k_j} &\coloneqq \bfs^{k_i} \prec^\Psi \bfs^{k_j} & 
    \bfs^{k_i} \prec \bft^j &\coloneqq \bft^k \prec^\Phi \bft^j \\
    \bft^j \prec \bfs^{k_i} &\coloneqq \bft^j \prec^\Phi \bft^k
    &
    \bft^i \prec \bft^j &\coloneqq \bft^i \prec^\Phi \bft^j
\end{align*}
Thus $\Psi\then_i\Phi$ satisfies the progress condition.
\end{itemize}
\end{definition}

We remark that the composite formula \eqref{eqn.composite_formula} is a declarative version of the imperative substitution algorithm presented in Listing~\ref{lst:substitute-wires}. In particular, the four cases in the algorithm correspond to the four entries in the composite matrix: incoming wire, passing wire, fully internal wire, and outgoing wire. Furthermore, the products within each component, e.g. $({}_{\bfu\inp}\Phi{}_{\bft^i\inp})
({}_{\bft^i\inp}\Psi{}_{\bft^i\outp})
({}_{\bft^i\outp}\Phi{}_{\bfu\outp})$, correspond to the fusing together of wires. Finally, the sum in the upper right-hand entry codifies the fact that the composite wiring diagram still includes all of the wires, ${}_{\bfu\inp}\Phi{}_{\bfu\outp}$, that did not interact with input box $\bft_i$.

Each of the entries in the fully decomposed matrix representation of a bijective span as in Definition~\ref{def:wd_operad} is either the empty set or a singleton; we will write $\zero$ in the former case and write $A$ for the singleton set $\{A\}$. Similarly, we will write $(A,B)$ for $\{A\}\times\{B\}$.

\begin{example}\label{ex.smc_ops}
We now define wiring diagrams corresponding to core SMC operations. We call these \emph{$\cat{W}$-representations} for any set $\tau$; though we introduce them in an example, they will play an important role in the theory.

\begin{itemize}

    \item \textbf{Symmetry.} Consider wires $\omega$, no inner boxes, and outer box $\bft=(\omega,\omega)$. For any permutation $\sigma:\omega\xrightarrow{\sim}\omega$, the $\cat{W}$-representation of symmetry is given by the 0-ary wiring diagram $\sym^\sigma_\omega:()\to\bft$ represented by the permutation matrix of $\sigma$. We let $\unit_\omega\coloneqq\sym_\omega^\id$. For instance here is the wiring diagram and matrix form in the case where $\sigma$ is the transposition of two elements.
    \[
    \begin{tikzpicture}[oriented WD, bb port length=0, font=\tiny, baseline=(B.north)]
        \node[bb={2}{2}] (outer) {};
        \draw (outer_in1) to node[above, pos=.25] {$A$} (outer_out2);
        \draw (outer_in2) to node[below, pos=.25] (B) {$B$} (outer_out1);    \end{tikzpicture}
        \hspace{1in}
    \begin{bmatrix}
    \zero & A
    \\
    B & \zero 
    \end{bmatrix}
    \]
    
    \item \textbf{Sequential composition.} Consider wires $\omega=\{A,B,C\}$, inner boxes $\bft,\bft'$, and outer box $\bfv$. The $\cat{W}$ representation of sequential composition (shown left) is given by the wiring diagram $\seq_{(A,B,C)}:\bft,\bft'\to\bfv$ with matrix of type $\bfv\inp\oplus\bft\outp\oplus\bft'\outp\to \bft\inp\oplus\bft'\inp\oplus\bfv\outp$ shown right:
\[
\begin{tikzpicture}[oriented WD, bb small, bb port length=0, font=\tiny, baseline=(x1)]
      \node[bb={1}{1}] (x1) {$\bft$};
      \node[bb={1}{1}, right=of x1] (x2) {$\bft'$};
      \node[bb={1}{1}, inner xsep=10pt, inner ysep=10pt, fit=(x1) (x2)] (outer) {};
      \node[above] at (outer.south) {$\bfv$};
      \draw (outer_in1') -- node[above] {$A$} (x1_in1);
      \draw (x1_out1) -- node[above] {$B$} (x2_in1);
      \draw (x2_out1) -- node[above] {$C$} (outer_out1');
    \end{tikzpicture}
\hspace{1in}
    \begin{bmatrix}
    A & \zero & \zero  
    \\
    \zero & B & \zero 
    \\ 
    \zero & \zero & C
    \end{bmatrix}
\]

    \item \textbf{Parallel composition.} Consider wires $\omega=\{A,A',B,B'\}$, inner boxes $\bft,\bft'$, and outer box $\bfv$. The $\cat{W}$ representation of parallel composition is given by the wiring diagram $\para_{\tiny \left[ \begin{smallmatrix}A&B\\A'&B'\end{smallmatrix}\right]}:\bft,\bft'\to\bfv$ with matrix of type $\bfv\inp\oplus\bft\outp\oplus\bft'\outp\to \bft\inp\oplus\bft'\inp\oplus\bfv\outp$,  where both $\bfv\inp$ and $\bfv\outp$ are two-dimensional.
    
\[
\begin{tikzpicture}[oriented WD, bb small, font=\tiny, baseline=(t')]
    \node[bb={1}{1}] (t) {$\bft$};
    \node[bb={1}{1}, below=1 of t] (t') {$\bft'$};
    \node[bb={0}{0}, inner xsep=12pt, inner ysep=13pt, fit= (t.-5) (t')] (outer) {};
    \node[above =-1pt] at (outer.south) {$\bfv$};
    \draw (t_in1) -- node[above] {$A$} (outer.west|-t_in1);
    \draw (t_out1) -- node[above] {$B$}  (outer.east|-t_out1);
    \draw (t'_in1) -- node[above] {$A'$} (outer.west|-t'_in1);
    \draw (t'_out1) -- node[above] {$B'$} (outer.east|-t'_out1);
\end{tikzpicture}
\hspace{1in}
    \begin{bmatrix}
    A & \zero & \zero & \zero 
    \\
    \zero & A' & \zero & \zero 
    \\
    \zero & \zero & B & \zero  
    \\ 
    \zero & \zero & \zero & B'
    \end{bmatrix}
\]
\end{itemize}
\end{example}

\begin{example}
     \textbf{Non-example.} Consider the following wiring diagram $\Phi$, which has a loop. 
\[
\begin{tikzpicture}[oriented WD, bb small, font=\tiny]
    \node[bb={1}{1}] (x1) {$\bft$};
    \node[bb={1}{1}, right=of x1] (x2) {$\bft'$};
    \node[bb={0}{0}, inner xsep=15pt, inner ysep=6pt, fit=(x1) (x2)] (outer) {};
    \node[below left =-1pt] at (outer.north east) {$\zero$};
    \draw (x1_out1) -- node[above, font=\tiny] {$A$} (x2_in1);
    \draw let   
        \p2 = (x2.south east),
        \p1 = (x1.south west),
        \n1 = \bbportlen,
        \n2 = \bby
    in
        (x2_out1) to[in=0]
        (\x2+\n1,\y2-\n2) -- node[above=-2pt, font=\tiny] {$B$}
        (\x1-\n1,\y1-\n2) to [out=180] (x1_in1);
\end{tikzpicture}
\]
    This wiring diagram is of type $\bft,\bft'\to\zero$, with wires $\{A,B\}$, given by the span
    \begin{align*}
        \Phi\src(A) &= \bft\outp
    &   \Phi\tgt(A) &= \bft'\inp 
    &  \Phi\src(B) &= \bft'\outp 
    &   \Phi\tgt(B) &= \bft\inp 
    \end{align*}
    The ordering $\prec$ from Definition~\ref{def:wd_operad} is not a partial order: both $\bft\prec\bft'$ and $\bft'\prec\bft$. Therefore this diagram fails to satisfy the progress condition of Definition~\ref{def:wd_operad}.
\end{example}

\begin{example}\label{ex.smcs_wds} \textbf{Interchange Law.}
To demonstrate composition, we will prove that $\cat{W}$-representations satisfy the interchange law. 
\[
\begin{tikzpicture}
\node (p1) {
\begin{tikzpicture}[oriented WD, bb small, font=\tiny]
    \node[bb={1}{1}] (t11) {$\bft^{11}$};
    \node[bb={1}{1}, right=1.5 of t11] (t12) {$\bft^{12}$};
    \node[bb={1}{1}, below=7 of t11] (t21) {$\bft^{21}$};
    \node[bb={1}{1}, right=1.5 of t21] (t22) {$\bft^{22}$};
    \node[bb={0}{0}, inner xsep=15pt, inner ysep=8pt, thin, gray, fit=(t11) (t12)] (t1) {};
    \node[bb={0}{0}, inner xsep=15pt, inner ysep=8pt, thin, gray, fit=(t21) (t22)] (t2) {};
    \node[bb={0}{0}, inner xsep=15pt, inner ysep=11pt, fit=(t1) (t2)] (outer) {};
    \node[above=-1pt] at (outer.south) {$\bfv$};
    \node[above=-1pt] at (t1.south) {$\bfu^{1\bullet}$};
    \node[above=-1pt] at (t2.south) {$\bfu^{2\bullet}$};
    \draw (t11_out1) -- node[above, font=\tiny] {$B^1\vphantom{A^1_+}$} (t12_in1);
    \draw (t21_out1) -- node[above, font=\tiny] {$B^2\vphantom{A^1_+}$} (t22_in1);
        \draw (t11_in1) -- node[above, font=\tiny, pos=.2] {$A^1\outp$} node[above, font=\tiny, pos=.7] {$A^1\inp$} (t11_in1-|outer.west);
        \draw (t21_in1) -- node[above, font=\tiny, pos=.2] {$A^2\outp$} node[above, font=\tiny, pos=.7] {$A^2\inp$} (t21_in1-|outer.west);
        \draw (t12_out1) -- node[above, font=\tiny, pos=.2] {$C^1\inp$} node[above, font=\tiny, pos=.7] {$C^1\outp$} (t12_out1-|outer.east);
        \draw (t22_out1) -- node[above, font=\tiny, pos=.2] {$C^2\inp$} node[above, font=\tiny, pos=.7] {$C^2\outp$} (t22_out1-|outer.east);
\end{tikzpicture}
};
\node (p2) [right=1 of p1] {
\begin{tikzpicture}[oriented WD, bb small, font=\tiny]
    \node[bb={1}{1}] (t11) {$\bft^{11}$};
    \node[bb={1}{1}, right=4.5 of t11] (t12) {$\bft^{12}$};
    \node[bb={1}{1}, below=2 of t11] (t21) {$\bft^{21}$};
    \node[bb={1}{1}, right=4.5 of t21] (t22) {$\bft^{22}$};
    \node[bb={0}{0}, inner xsep=17pt, inner ysep=14pt, thin, gray, fit=(t11) (t21)] (t1) {};
    \node[bb={0}{0}, inner xsep=17pt, inner ysep=14pt, thin, gray, fit=(t12) (t22)] (t2) {};
    \node[bb={0}{0}, inner xsep=15pt, inner ysep=10pt, fit=(t1) (t2)] (outer) {};
    \node[above=-1pt] at (outer.south) {$\bfv$};
    \node[above=-1pt] at (t1.south) {$\bfu^{\bullet1}$};
    \node[above=-1pt] at (t2.south) {$\bfu^{\bullet2}$};
    \draw (t11_out1) -- node[above, font=\tiny] {$B^1_0\vphantom{A^1_0}$} node[above, font=\tiny, pos=.12] {$B^1\inp\vphantom{A^1_0}$} node[above, font=\tiny, pos=.88] {$B^1\outp\vphantom{A^1_0}$} (t12_in1);
    \draw (t21_out1) -- node[above, font=\tiny] {$B^2_0\vphantom{A^1_0}$}  node[above, font=\tiny, pos=.12] {$B^2\inp\vphantom{A^1_0}$} node[above, font=\tiny, pos=.88] {$B^2\outp\vphantom{A^1_0}$} (t22_in1);
        \draw (t11_in1) -- node[above, font=\tiny, pos=.2] {$A^1\outp\vphantom{A^1_0}$} node[above, font=\tiny, pos=.7] {$A^1\inp\vphantom{A^1_0}$} (t11_in1-|outer.west);
        \draw (t21_in1) -- node[above, font=\tiny, pos=.2] {$A^2\outp\vphantom{A^1_0}$} node[above, font=\tiny, pos=.7] {$A^2\inp\vphantom{A^1_0}$} (t21_in1-|outer.west);
        \draw (t12_out1) -- node[above, font=\tiny, pos=.2] {$C^1\inp\vphantom{A^1_0}$} node[above, font=\tiny, pos=.7] {$C^1\outp\vphantom{A^1_0}$} (t12_out1-|outer.east);
        \draw (t22_out1) -- node[above, font=\tiny, pos=.2] {$C^2\inp\vphantom{A^1_0}$} node[above, font=\tiny, pos=.7] {$C^2\outp\vphantom{A^1_0}$} (t22_out1-|outer.east);
\end{tikzpicture}
};
\node[font=\normalsize] at ($(p1.east)!.5!(p2.west)$) {=};
\end{tikzpicture}
\]

We wish to show the following diagram commutes:
    \[
    \begin{tikzcd}[column sep = 55pt]
    \bft^{11},\bft^{21},\bft^{12},\bft^{22} 
        \ar[r,"\seq_{\omega^1}{,}\seq_{\omega^2}"]
        \ar[d,"\para_{\eta^1}{,}\para_{\eta^2}"']
    & 
    \bfu^{1\bullet},\bfu^{2\bullet}
        \ar[d,"\para_\eta"]
    \\
    \bfu^{\bullet1},\bfu^{\bullet2}
        \ar[r,"\seq_\omega"']
    &
    \bfv
    \end{tikzcd}
    \] 
    where $\bft^{11},\bft^{12},\bft^{21},\bft^{22}$ are inner boxes, $\bfu^{1\bullet},\bfu^{2\bullet},\bfu^{\bullet1},\bfu^{\bullet2}$ are intermediary boxes, $\bfv$ is the outer box, and the subscripts on the morphisms correspond to the following wire tuples:
    \begin{align*}
        \omega^1 &= (A^1\outp,B^1,C^1\inp)
    &   \omega^2 &= (A^2\outp,B^2,C^2\inp)   
    & \omega &= 
        (A^1\inp\oplus A^2\inp,B^1_0\oplus B^2_0,C^1\outp\oplus C^2\outp)
    \\
    \eta^1 &= 
        \begin{bmatrix}
            A^1\outp & B^1\inp
        \\  A^2\outp & B^2\inp  
        \end{bmatrix} 
    & \eta^2 &= 
        \begin{bmatrix}
            B^1\outp & C^1\inp
        \\  B^2\outp & C^2\inp  
        \end{bmatrix}
    &   \eta &= 
        \begin{bmatrix}
            A^1\inp & C^1\outp
        \\  A^2\inp & C^2\outp
        \end{bmatrix}
    \end{align*}
    The composite spans are represented as matrices of type
    \[\bfv\inp\oplus\bft^{11}\outp\oplus\bft^{21}\outp\oplus\bft^{12}\outp\oplus\bft^{22}\outp\to\bft^{11}\inp\oplus\bft^{21}\inp\oplus\bft^{12}\inp\oplus\bft^{22}\inp\oplus\bfv\outp\]
    where $\bfv_\pm$ are two-dimensional; we now compute these composites to be as follows:
    \begin{align*}
    \seq_{\omega^1},\seq_{\omega^2}\then\para_\eta &=\scriptsize
    \begin{bmatrix}
    (A^1\inp,A^1\outp) &  \zero & \zero & \zero & \zero & \zero 
    \\ 
   \zero &  (A^2\inp,A^2\outp)  & \zero & \zero & \zero & \zero 
    \\
    \zero & \zero &  B^1 & \zero & \zero & \zero 
    \\
    \zero & \zero & \zero & B^2 & \zero & \zero 
    \\
    \zero & \zero & \zero & \zero & (C^1\inp,C^1\outp) & \zero 
    \\
    \zero & \zero & \zero & \zero & \zero & (C^2\inp,C^2\outp)
    \end{bmatrix}
    \\
    \para_{\eta^1},\para_{\eta^2}\then\seq_\omega &=\scriptsize
    \begin{bmatrix}
    (A^1\inp,A^1\outp) &  \zero & \zero & \zero & \zero & \zero 
    \\ 
   \zero &  (A^2\inp,A^2\outp)  & \zero & \zero & \zero & \zero 
    \\
    \zero & \zero &  (B^1\inp,B^1_0,B^1\outp) & \zero & \zero & \zero 
    \\
    \zero & \zero & \zero & (B^2\inp,B^2_0,B^2\outp)  & \zero & \zero 
    \\
    \zero & \zero & \zero & \zero & (C^1\inp,C^1\outp) & \zero 
    \\
    \zero & \zero & \zero & \zero & \zero & (C^2\inp,C^2\outp)
    \end{bmatrix}
    \end{align*}
    These matrices of sets are isomorphic since they have isomorphic sets (singleton and empty sets) in corresponding entries.

\end{example}

\section{Operad algebras and symmetric monoidal categories}
\label{ch:smcs}

With the operad $\cat{W}_\tau$ of acyclic wiring diagrams (whose wires are labeled by a chosen set $\tau$) in hand, we now formalize how it captures the compositional structure of symmetric monoidal categories. Recall that a $\cat{W}_\tau$-algebra is an operad functor:
\[H\colon\cat{W}_\tau\to\smset,\]
where $\smset$ is the operad corresponding to the symmetric monoidal category $(\smset,\times,1)$. We will now show precisely how such algebras correspond to strict SMC's. Before doing so, we will first characterize $\cat{W}_\tau$ in terms of generators.

\begin{lemma}\label{lemma.SMC_W-alg}
The operad $\cat{W}_\tau$ is generated by $\sym,\seq,\para$ defined in Example~\ref{ex.smc_ops}, and these satisfy the usual axioms (parallel and sequential unitality and associativity, interchange, and permutation).
\end{lemma}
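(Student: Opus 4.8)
The plan is to prove this in two movements: first, that every wiring diagram in $\cat{W}_\tau$ can be written as an iterated partial composite of the generators $\sym$, $\seq$, $\para$ (and the identities $\inert$); and second, that these generators satisfy the listed axioms. For the second part, the interchange axiom has already been verified in Example~\ref{ex.smcs_wds}, and the commented-out computations in that section indicate that sequential/parallel associativity, sequential/parallel unitality, and the permutation/naturality axioms are all proven the same way: write both composite spans as block matrices via the composition formula \eqref{eqn.composite_formula}, observe that corresponding entries are the same singleton or empty set, and conclude the spans are isomorphic. So the real content is the generation claim, and that is the step I expect to be the main obstacle.

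For the generation claim, I would argue by induction on the number $n$ of inner boxes of a wiring diagram $\Phi\colon\bft^1,\dots,\bft^n\to\bfv$. The base case $n=0$ is a $0$-ary diagram $()\to\bfv$; since the legs are bijections and there are no inner-box ports, $\Phi$ is a span $\bfv\inp\xleftarrow{\sim}\omega\xrightarrow{\sim}\bfv\outp$, i.e.\ exactly a permutation $\sigma\colon\bfv\inp\xrightarrow{\sim}\bfv\outp$ of wires (after identifying $\omega$ with $\bfv\inp$), which is the $\cat{W}$-representation $\sym^\sigma_\omega$. For the base case $n=1$, the progress condition is automatic and $\Phi\colon\bft^1\to\bfv$ is $\inert_{\bft^1}$ post-composed with a suitable permutation of the outer ports, so it is obtained from $\sym$ and $\inert$. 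For the inductive step, I would use the progress condition: since $\prec^\Phi$ is a genuine partial order on $\{\bft^1,\dots,\bft^n\}$, pick a $\prec^\Phi$-minimal box, say $\bft^1$. Minimality means no wire runs from an output of another inner box into an input of $\bft^1$, so every wire touching $\bft^1\inp$ comes from $\bfv\inp$. This lets me factor $\Phi$ as $\para(\bft^1,\,\Phi')\then_{2}(\text{something})$ — more precisely, exhibit an auxiliary box $\bfu$ bundling the remaining boxes $\bft^2,\dots,\bft^n$, a diagram $\Phi'\colon\bft^2,\dots,\bft^n\to\bfu$ with one fewer inner box, and a $2$-ary diagram $\Psi\colon\bft^1,\bfu\to\bfv$ that wires $\bft^1$ in parallel next to $\bfu$ and then routes; the minimal box and the acyclicity guarantee $\Phi'$ still satisfies the progress condition, so induction applies to it, and $\Psi$ is handled by the $n\le 2$ cases together with $\sym$ and $\para$. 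Then $\Phi = \Phi' \then_2 \Psi$, completing the induction.

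The main obstacle is making the factorization in the inductive step precise and checking it reproduces $\Phi$ on the nose (up to span isomorphism) via the composition formula \eqref{eqn.composite_formula} — in particular that the block-matrix bookkeeping of $\bfu\inp,\bfu\outp$ really does reassemble all the wires of $\Phi$, including passing wires $\bfv\inp\to\bfv\outp$ and wires among $\bft^2,\dots,\bft^n$, with no duplication or loss. A cleaner alternative that sidesteps some of this bookkeeping is to decompose $\Phi$ ``one box at a time in $\prec^\Phi$-order'': a $\prec^\Phi$-compatible linear ordering of the inner boxes lets one build $\Phi$ as a sequence of $\seq$'s and $\para$'s interleaved with $\sym$'s, mirroring how an ordinary SMC morphism is built from a string diagram by sweeping left to right; correctness again reduces to matrix-entry comparisons as in Example~\ref{ex.smcs_wds}. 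Either way, no rewriting is needed — the identifications are literal equalities of spans up to isomorphism — which is the whole point of the wiring-diagram formalism.
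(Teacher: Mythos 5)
Your proposal follows essentially the same route as the paper's own (sketch of) proof: induction on the number of inner boxes, with the $n=0$ case being a permutation $\sym^\sigma$, and the inductive step using the progress condition to peel off a $\prec^\Phi$-minimal box as a parallel-then-sequential composite against a single box bundling the remainder, with the axioms verified by matrix computations as in Example~\ref{ex.smcs_wds}. The bookkeeping you flag as the main obstacle is exactly the step the paper also leaves unverified, so the two arguments match in both structure and level of detail.
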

\begin{proof}[Sketch of proof]
We proceed by induction on the number of inner boxes. Let $\Phi\colon \bft_1,\ldots,\bft_n\to\bfv$ be a morphism. If $n=0$ then this is just some wire permutation $\sym_\sigma$. For $n\geq 0$, the progress condition gives a partial ordering on boxes. Without loss of generality, suppose that $\bft_1$ is minimal. Then we can rewrite the diagram in the form
\begin{equation}\label{eqn.dmitry_special}
\begin{tikzpicture}[oriented WD, bb small]
 	\node[bb={1}{2}] (Monf) {$\bft_1$};
  	\node[bb={2}{1}, dotted, below right=0 and 2 of Monf] (Mong) {\vphantom{$\bft_1$}};
  	\node[bb={0}{0}, inner xsep=.3cm, fit=(Monf) (Mong)] (Mon) {};
  	\node[coordinate] at (Mon.west|-Monf_in1) (Mon_in1) {};
  	\node[coordinate] at (Mon.west|-Mong_in2) (Mon_in2) {};
  	\node[coordinate] at (Mon.east|-Monf_out1) (Mon_out1) {};
  	\node[coordinate] at (Mon.east|-Mong_out1) (Mon_out2) {};
  	\draw (Mon_in1) --  (Monf_in1);
  	\draw (Monf_out1) -- (Mon_out1);
  	\draw (Monf_out2) to (Mong_in1);
  	\draw (Mon_in2) --  (Mong_in2);
  	\draw (Mong_out1) -- (Mon_out2);
 \end{tikzpicture}
\end{equation}
where all the wires shown can represent multiple wires (possibly none). The wiring diagram in \eqref{eqn.dmitry_special} can be written as a sequential composite of parallel composites.
\erase{
Let the $f$ box be $\bft$, the dotted box be $\bfu$, and the outer box be $\bfv$. Inside the dotted box imagine a 2-box wiring diagram, where we name those two boxes $\bfs$ and $\bfs'$.

The outer wiring diagram attaches to the $\bfu\inp$ ports in two ways: from $\bft\outp$ ports and from $\bfv\inp$ ports, giving a map $outer:\bfu\inp\to\mathbf{2}$. The inner wiring diagram attaches to the $\bfu\inp$ ports in two ways: to $\bfs\outp$ ports and to $\bfs'\outp$ ports, giving another map $inner:\bfu\inp\to\mathbf{2}$. We can pair (fork) these maps to get a map $\bfu\inp\to\mathbf{2}\times\mathbf{2}$.
}
An example axiom relating series and parallel composition was proved in Example~\ref{ex.smcs_wds}. 
\end{proof}

We now show how $\cat{W}$-algebras give rise to SMC's.

\begin{theorem}
\label{thm.wdalgto_smc}
Let $\tau$ be a set. There is a fully faithful functor from the category of $\cat{W}_\tau$-algebras to that of strict SMC's whose objects are $\tau$-typed finite sets and whose morphisms are identity-on-objects symmetric monoidal functors.
\end{theorem}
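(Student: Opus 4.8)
The plan is to construct the functor $R$ by hand, after which full faithfulness will follow from the generators-and-relations description of $\cat{W}_\tau$ in Lemma~\ref{lemma.SMC_W-alg}. Given a $\cat{W}_\tau$-algebra $H\colon\cat{W}_\tau\to\smset$, I would let $R(H)=\cat{C}_H$ be the category with objects the $\tau$-typed finite sets and hom-sets $\cat{C}_H(x,y)\coloneqq H\bigl((x,y)\bigr)$, where $(x,y)$ denotes the box with inputs $x$ and outputs $y$. All of the SMC structure is transported from $\cat{W}_\tau$ along $H$: the identity of $x$ is $H(\unit_x)(\ast)$ (recall $\unit_x\colon(\,)\to(x,x)$ is nullary); composition $\cat{C}_H(x,y)\times\cat{C}_H(y,z)\to\cat{C}_H(x,z)$ is $H(\seq)$ for the sequential-composition $\cat{W}$-representation $\seq\colon(x,y),(y,z)\to(x,z)$; the monoidal product on objects is disjoint union, with unit the empty typed set; the monoidal product of morphisms is $H(\para)$ for the evident $\para\colon(x_1,y_1),(x_2,y_2)\to(x_1\sqcup x_2,\,y_1\sqcup y_2)$; and the braiding is the image under $H$ of the symmetry $\cat{W}$-representation of Example~\ref{ex.smc_ops}.

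First I would check that $\cat{C}_H$ is a strict SMC of the prescribed form. By Lemma~\ref{lemma.SMC_W-alg} the representations $\sym,\seq,\para$ satisfy the usual axioms \emph{as equations of wiring diagrams}; applying the functor $H$ turns each of these into the corresponding equation of functions in $\smset$, which---after unwinding the definitions---is precisely an axiom of a strict SMC: sequential unitality and associativity make $\cat{C}_H$ a category, parallel unitality and associativity together with interchange make $\otimes$ a strict monoidal product (strictness on objects being strict associativity and unitality of $\sqcup$), and the permutation axioms yield the naturality, hexagons, and involutivity of the braiding. On morphisms, a map of $\cat{W}_\tau$-algebras $\theta\colon H\to H'$ is a natural transformation of operad functors whose component at the box $(x,y)$ is a function $\theta_{(x,y)}\colon\cat{C}_H(x,y)\to\cat{C}_{H'}(x,y)$; I would let $R(\theta)$ be the identity on objects and $\theta_{(x,y)}$ on each hom-set, noting that naturality of $\theta$ against $\unit$, $\seq$, $\para$ and the symmetries says exactly that $R(\theta)$ preserves identities, composition, $\otimes$ and the braiding on the nose, i.e.\ is an identity-on-objects symmetric monoidal functor. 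Functoriality of $R$ is immediate, everything being defined componentwise.

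Faithfulness is then easy: $R(\theta)=R(\theta')$ forces $\theta_{(x,y)}=\theta'_{(x,y)}$ on every hom-set, and since every object of $\cat{W}_\tau$ is a box and a natural transformation of operad functors is determined by its components, $\theta=\theta'$. For fullness I would start from an identity-on-objects symmetric monoidal functor $F\colon\cat{C}_H\to\cat{C}_{H'}$, define $\theta_{(x,y)}$ to be the action of $F$ on $\cat{C}_H(x,y)$, and show that $\theta$ is natural against \emph{every} multimorphism of $\cat{W}_\tau$, hence a map of algebras with $R(\theta)=F$. It is natural against the generators $\unit,\seq,\para,\sym$ because $F$ is symmetric monoidal; by Lemma~\ref{lemma.SMC_W-alg} every morphism of $\cat{W}_\tau$ is built from these and identities by operadic composition; and the multimorphisms along which a fixed componentwise family is natural form a class closed under operadic composition and containing the identities, so naturality extends to all of $\cat{W}_\tau$.

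I expect the crux to be fullness, and within it the appeal to Lemma~\ref{lemma.SMC_W-alg}. What is really needed there is the explicit normal form of an arbitrary acyclic wiring diagram as a sequential composite of parallel composites and symmetries---the content of the lemma's proof, obtained by inducting on the number of inner boxes and peeling off a $\prec$-minimal box as in~\eqref{eqn.dmitry_special}---together with the bookkeeping fact that naturality of a componentwise family propagates through operadic composition. A secondary difficulty is the step of matching the wire-indexed, unbiased axioms of Lemma~\ref{lemma.SMC_W-alg} against the standard axioms of a strict SMC, and checking that the chosen model of disjoint union on $\tau$-typed finite sets is strictly associative and unital, so that $\cat{C}_H$ is genuinely strict.
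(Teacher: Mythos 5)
Your proposal is correct and follows essentially the same route as the paper: define $\cat{C}_H$ with hom-sets $H(\bft\inp,\bft\outp)$, transport identities, composition, $\otimes$, and braiding along $H$ from the $\cat{W}$-representations $\unit,\seq,\para,\sym$, derive the SMC axioms from the relations in Lemma~\ref{lemma.SMC_W-alg}, and identify algebra maps with identity-on-objects symmetric monoidal functors componentwise. In fact you spell out the fullness direction (naturality against the generators propagating through operadic composition) more explicitly than the paper's sketch, which only states the bijection on morphisms; this is a welcome addition rather than a deviation.
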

\begin{proof}[Sketch of proof]
Let $H\colon\cat{W}_\tau\to\smset$ be a functor. We need to define an SMC $(\cat{C},\otimes, I)$ with objects $\tau$-typed finite sets; in particular, we need to define the hom-set for a given pair of objects $(\bft\inp,\bft\outp)\in\ob(\cat{C})$. But since objects in $\cat{W}_\tau$ are exactly such pairs, we may simply use $H$:
\begin{equation}\label{eqn.defn_C_from_H}
    \cat{C}(\bft\inp,\bft\outp)\coloneqq H(\bft\inp,\bft\outp).
\end{equation}
We will obtain the identities, composition, and monoidal structure of $\cat{C}$, by applying $H$ to various wiring diagrams.

For any object $A\in\ob(\cat{C})$, consider the box $\bft=(A,A)$ and the 0-ary wiring diagram $\unit_A\colon ()\to \bft$ given by $
\begin{tikzpicture}[oriented WD, bb small, font=\tiny]
    \node[bb = {1}{1}] (box) {\vphantom{t}};
    \draw (box_in1') -- node[left, pos=0] {$A$} node [right, pos=1] {$A$} (box_out1');
\end{tikzpicture}
$. 
We obtain a function $H(\unit_A)\colon 1\to H(A,A)$, and we define the identity on $A$ to be the image of the unique element.

Given objects $A,B,C\in\ob(\cat{C})$, we need a function $\Hom(A,B)\times\Hom(B,C)\to\Hom(A,C)$. We obtain it by applying $H$ to the wiring diagram $\seq_{(A,B,C)}:(A,B),(B,C)\to(A,C)$, giving 
\begin{align*}\then_{A,B,C} &\coloneqq H(\seq_{(A,B,C)})\colon H(A,B)\times H(B,C)\to H(A,C).
\intertext{Similarly, given objects $A,A',B,B'\in\ob(\cat{C})$, parallel composition is defined by} \otimes_{A,A',B,B'}
&\coloneqq H\Big(\para_{\tiny \left[ \begin{smallmatrix}A&B\\A'&B'\end{smallmatrix}\right]}\Big)\colon H(A,B)\times H(A',B')\to H(A\otimes A',B\otimes B').
\end{align*}
These two compositions are respectively depicted below.

\[
\begin{tikzpicture}[oriented WD, bb small, bb port length=0, font=\tiny, baseline=(x1)]
  \node[bb={1}{1}] (x1) {};
  \node[bb={1}{1}, right=of x1] (x2) {};
  \node[bb={1}{1}, inner xsep=10pt, inner ysep=10pt, fit=(x1) (x2)] (outer) {};
  \draw (outer_in1') -- node[above] {$A$} (x1_in1);
  \draw (x1_out1) -- node[above] {$B$} (x2_in1);
  \draw (x2_out1) -- node[above] {$C$} (outer_out1');
\end{tikzpicture}
\hspace{1in}
\begin{tikzpicture}[oriented WD, bb small, font=\tiny, baseline=(t'.north)]
    \node[bb={1}{1}] (t) {};
    \node[bb={1}{1}, below=1.7 of t] (t') {};
    \node[bb={0}{0}, inner xsep=15pt, inner ysep=8pt, fit= (t) (t')] (outer) {};
    \draw (t_in1) -- node[above] {$A$} (outer.west|-t_in1);
    \draw (t_out1) -- node[above] {$B$}  (outer.east|-t_out1);
    \draw (t'_in1) -- node[above] {$A'$} (outer.west|-t'_in1);
    \draw (t'_out1) -- node[above] {$B'$} (outer.east|-t'_out1);
\end{tikzpicture}
\]

The interchange law was established in Example~\ref{ex.smcs_wds}, and the others, unitality and associativity of serial composition, and unitality, associativity, and symmetry of parallel composition, are similar.

It remains to show that for any two $\cat{W}_\tau$-algebras $H,H'$ with associated SMCs $\cat{C},\cat{C}'$, we have a bijection between the set of natural transformations $H\to H'$ and that of identity-on-objects symmetric monoidal functors $\cat{C}\to\cat{C}'$. A natural transformation $\alpha\colon H\to H'$ defines a function $H(\bft\inp,\bft\outp)\to H'(\bft\inp,\bft\outp)$, which by \eqref{eqn.defn_C_from_H} defines the required functor on hom-sets. This functor respects identity, composition, and the symmetric monoidal structure by the naturality of $\alpha$ with respect to the morphisms in $\cat{W}_\tau$ that correspond to these structures.
\end{proof}

The SMC resulting from the procedure in Theorem~\ref{thm.wdalgto_smc} is free-on-objects. In the next theorem we will prove that the functor from Theorem~\ref{thm.wdalgto_smc} is 2-essentially surjective: every symmetric monoidal category is equivalent to one coming from an $\cat{W}$-algebra.

\begin{theorem}\label{thm.smc_alg}
To every SMC $(\cat{C},\otimes,I)$ there is an associated $\cat{W}_{\ob(\cat{C})}$-algebra $H_{\cat{C}}$. If we then apply the construction of Theorem~\ref{thm.wdalgto_smc} to $H_\cat{C}$ to obtain an SMC $\cat{C}'$, there is an induced equivalence of symmetric monoidal categories $\cat{C}'\to\cat{C}$.
\end{theorem}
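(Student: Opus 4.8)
The plan is to construct the algebra $H_{\cat{C}}$ by prescribing it on generators, invoking Lemma~\ref{lemma.SMC_W-alg} to reduce well-definedness to a short list of axiom checks, and then to exhibit an explicit strong symmetric monoidal functor $F\colon\cat{C}'\to\cat{C}$ which is the identity on hom-sets. For an $\ob(\cat{C})$-typed finite set $x$ write $\bigotimes x$ for a chosen iterated tensor product of the associated family of objects of $\cat{C}$, with $\bigotimes\emptyset=I$ and $\bigotimes$ of a one-element family equal to that object. To make $\bigotimes$ an honest function of objects rather than one merely well-defined up to canonical isomorphism, I would first replace $\cat{C}$ by an equivalent strict — indeed unbiased — symmetric monoidal category via Mac Lane's coherence theorem; this is harmless, since only an equivalence is claimed in the end. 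Using Lemma~\ref{lemma.SMC_W-alg}, which presents $\cat{W}_{\ob(\cat{C})}$ via the generators $\sym,\seq,\para$ and the usual axioms, giving a $\cat{W}_{\ob(\cat{C})}$-algebra amounts to giving images of these generators satisfying the corresponding axioms. Define $H_{\cat{C}}$ on a box $\bft$ by $H_{\cat{C}}(\bft)\coloneqq\Hom_{\cat{C}}(\bigotimes\bft\inp,\bigotimes\bft\outp)$, send $\seq_{(A,B,C)}$ to composition $\Hom(A,B)\times\Hom(B,C)\to\Hom(A,C)$, send $\para$ to the tensoring-of-morphisms map, and send the $0$-ary diagram $\sym^{\sigma}_{\omega}$ to the map $1\to\Hom_{\cat{C}}(\bigotimes\omega,\bigotimes\omega)$ selecting the symmetry that permutes the tensor factors according to $\sigma$ (so $\unit_{\omega}\mapsto\id$).

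Next I would check that each axiom listed in Lemma~\ref{lemma.SMC_W-alg}, after applying $H_{\cat{C}}$, becomes a statement valid in every symmetric monoidal category: sequential unitality and associativity become the category axioms of $\cat{C}$; parallel unitality, parallel associativity, and the permutation axioms become functoriality of $\otimes$ together with coherence and naturality of the symmetry, all trivial once $\cat{C}$ is strict; and interchange becomes bifunctoriality of $\otimes$ in $\cat{C}$, the corresponding wiring-diagram identity being precisely the one verified in Example~\ref{ex.smcs_wds}. Together with sending $\inert$ to identity functions, this shows $H_{\cat{C}}$ is a $\cat{W}_{\ob(\cat{C})}$-algebra.

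Let $\cat{C}'$ be the SMC produced from $H_{\cat{C}}$ by Theorem~\ref{thm.wdalgto_smc}: its objects are $\ob(\cat{C})$-typed finite sets, $\cat{C}'(x,y)=H_{\cat{C}}(x,y)=\Hom_{\cat{C}}(\bigotimes x,\bigotimes y)$, and its identities, composition, and monoidal product — with $x\otimes_{\cat{C}'}y$ the disjoint union $x\sqcup y$ on objects — are by construction the images under $H_{\cat{C}}$ of $\unit$, $\seq$, and $\para$, i.e.\ those of $\cat{C}$ transported along $\bigotimes$. Define $F\colon\cat{C}'\to\cat{C}$ by $F(x)=\bigotimes x$ on objects and by the identity on hom-sets, under the identification $\cat{C}'(x,y)=\Hom_{\cat{C}}(Fx,Fy)$. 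Functoriality is immediate from the definitions of the identities and composition of $\cat{C}'$ via $H_{\cat{C}}$. The canonical coherence maps $\bigotimes x\otimes\bigotimes y\xrightarrow{\ \sim\ }\bigotimes(x\sqcup y)$ and $I\xrightarrow{\ \sim\ }\bigotimes\emptyset$, which merge two families into one, make $F$ strong symmetric monoidal, with naturality and the monoidal and hexagon coherences reducing to Mac Lane coherence in $\cat{C}$. Finally, $F$ is fully faithful because it is the identity on hom-sets, and it is (essentially) surjective on objects since $c=F(x_c)$ for the one-element typed set $x_c\colon\{\ast\}\to\ob(\cat{C})$, $\ast\mapsto c$. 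A strong symmetric monoidal functor that is an equivalence of underlying categories is an equivalence of symmetric monoidal categories, so $F$ is the required equivalence $\cat{C}'\to\cat{C}$.

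I expect the main obstacle to be the coherence bookkeeping around $\bigotimes$: the operad $\cat{W}_{\ob(\cat{C})}$ carries a strictly unital, strictly associative, unbiased monoidal product on objects that are genuine finite sets with no chosen order, whereas $\cat{C}$'s product is a priori biased and non-strict. Strictifying $\cat{C}$ at the outset removes the associativity and unitality mismatch, and the residual dependence of ``tensor a finite set of objects'' on a choice of order is mediated entirely by symmetries, so that every equation demanded of $H_{\cat{C}}$ and every square demanded of $F$ commutes by Mac Lane's coherence theorem. Carrying this out carefully is the bulk of the work, but it introduces no new ideas beyond coherence.
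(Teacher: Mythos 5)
Your proposal follows essentially the same route as the paper's own sketch: define $H_{\cat{C}}$ on boxes by the hom-set $\cat{C}(\bigotimes\bft\inp,\bigotimes\bft\outp)$, use Lemma~\ref{lemma.SMC_W-alg} to reduce the algebra structure to the generators $\seq$, $\para$, $\sym$ mapped to composition, tensor, and symmetry in $\cat{C}$, and then exhibit the roundtrip equivalence $\cat{C}'\to\cat{C}$ as $\bigotimes$ on objects and a bijection (here the identity) on hom-sets, checked to be surjective-on-objects, fully faithful, and symmetric monoidal. Your additional care about strictifying $\cat{C}$ first and mediating the choice of ordering on unordered typed sets via coherence is a legitimate filling-in of details the paper's sketch leaves implicit, not a different argument.
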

\begin{proof}[Sketch of proof]
Let $(\cat{C},\otimes,I)$ be a symmetric monoidal category. We define a functor  $H_{\cat{C}}\colon\cat{W}_{\ob(\cat{C})}\to\smset$ as follows. Suppose given a box, i.e.\ an object $(\bft\inp,\bft\outp)$ in $\cat{W}_{\ob(\cat{C})}$, where $\bft\inp\colon S\inp\to\ob(\cat{C})$ and $\bft\outp\colon S\outp\to\ob(\cat{C})$ are the typed finite sets. To it we assign the hom-set 
\begin{equation}\label{eqn.define_H}
H_{\cat{C}}(\bft\inp,\bft\outp)
\coloneqq
\cat{C}\left(
\bigotimes_{s\in S\inp}\bft\inp(s),
\bigotimes_{s\in S\outp}\bft\outp(s)
\right).
\end{equation}
By Lemma~\ref{lemma.SMC_W-alg}, $\cat{W}_{\ob(\cat{C})}$ is generated by the morphisms $\para,\seq$, and $\sym$ corresponding to parallel composition, series composition, and permutation, and that these morphisms satisfy well-known relations. Thus to give the action of $H_{\cat{C}}$ on morphisms, it suffices to say how it acts on these generators, and show that relations hold. For the morphism $\seq_{(A,B,C)}$ in $\cat{W}_{\ob(\cat{C})}$ representing series composition
\[
\begin{tikzpicture}[oriented WD, bb small, bb port length=0, font=\tiny]
  \node[bb={1}{1}] (x1) {$\bft$};
  \node[bb={1}{1}, right=of x1] (x2) {$\bft'$};
  \node[bb={1}{1}, inner xsep=10pt, inner ysep=10pt, fit=(x1) (x2)] (outer) {};
  \node[above] at (outer.south) {$\bfv$};
  \draw (outer_in1') -- node[above] {$A$} (x1_in1);
  \draw (x1_out1) -- node[above] {$B$} (x2_in1);
  \draw (x2_out1) -- node[above] {$C$} (outer_out1');
\end{tikzpicture}
\]
we need to give a function $H_{\cat{C}}(A,B)\times H_{\cat{C}}(B,C)\to H_{\cat{C}}(A,C)$. By definition this is just a function $\cat{C}(A,B)\times\cat{C}(B,C)\to\cat{C}(A,C)$, and of course we use the composition function $\then_{A,B,C}$ from $\cat{C}$ as a category. The case for parallel composition (resp.\ symmetry) is similar: one uses the monoidal product (resp.\ symmetry) from $\cat{C}$. These satisfy the required diagrammatic relations because by definition $\cat{C}$ satisfies the laws of monoidal categories.

Finally we consider the roundtrip, starting with $\cat{C}$, constructing $H_\cat{C}$ as above, and then applying the construction of Theorem~\ref{thm.wdalgto_smc} to obtain a new SMC $\cat{C}'$. The objects of $\cat{C}'$ are the $\ob(\cat{C})$-typed finite sets, and there is a surjection $\ob(\cat{C}')\to\ob(\cat{C})$ given by sending $\bft\colon S\to\ob(\cat{C})$ to $\bigotimes_{s\in S}\bft(s)$. At this point, \eqref{eqn.defn_C_from_H} and \eqref{eqn.define_H} provide a bijection $\cat{C}'(\bft\inp,\bft\outp)\to\cat{C}(\bigotimes_{s\in S\inp}\bft\inp(s),\bigotimes_{s\in S\outp}\bft\outp(s))$, which one can check is part of a surjective-on-objects fully faithful symmetric monoidal functor.
\end{proof}

\subsection*{Acknowledgments}
David Spivak acknowledges support from AFOSR grants FA9550-19-1-0113 and FA9550-17-1-0058. The authors also appreciate the helpful reviews of the ACT2020 program committee, which improved the quality of this paper.

\printbibliography

\end{document}